\newcommand{\Var}{\text{Var}}
\newcommand{\Exp}{\text{Exp}}
\newcommand{\ExpP}{\Exp_P}
\newcommand{\ExpBone}{\Exp_{B1}}
\newcommand{\ExpBtwo}{\Exp_{B}}
\newcommand{\VarP}{\Var_P}
\newcommand{\VarBtwo}{\Var_{B}}
\newcommand{\GraphG}{{\mathbb G}}  
\newcommand{\originalgraphW}{{\mathbb W}}
\newcommand{\adjacentTedges}{ \#{\mathbb T}}
\newcommand{\adjacentCedges}{ \#{\mathbb C}}
\renewcommand{\vec}[1]{\mathbf{#1}}
\newcommand{\vectorC}{\vec{C}}
\newcommand{\vectorT}{\vec{T}}
\newcommand{\vectorTC}{\vec{TC}}
\newcommand{\indegree}{\text{ID}}
\newcommand{\outdegree}{\text{OD}}
\newcommand{\indegreev}{\indegree(v)}
\newcommand{\outdegreev}{\outdegree(v)}
\newcommand{\Case}[1]{\text{*Case #1*}}
\newcommand{\Eterm}[1]{\text{e#1}}
\newcommand{\remterm}[1]{\text{reme#1}}
\newtheorem{theorem}{Theorem}
\begin{document}
\title{Exact Bootstrap and Permutation Distribution of Wins and Losses in a Hierarchical Trial\footnote{This is a revision of an earlier preprint dated Januay 28, 2019.  The differences are mainly in organization; the results remain the same.}}
\author{William N. Anderson\footnote{Carpinteria, California, USA. Email: WNilesAnderson@gmail.com. LinkedIn: william-anderson-46384b7} and Johan Verbeeck\footnote{I-Biostat, University Hasselt, Belgium.  Email: johan.verbeeck@uhasselt.be}}
\date{November 24, 2019}
\maketitle
\begin{abstract}
Finkelstein-Schoenfeld, Buyse, Pocock,  and other authors have developed generalizations of the Mann-Whitney test that allow for pairwise patient comparisons to include a hierarchy of measurements. Various authors present either asymptotic or randomized methods for analyzing the wins.  We use graph theory concepts to derive exact means and variances for the number of wins, as a replacement for approximate values obtained from bootstrap analysis or random sampling from the permutation distribution.  The time complexity of our algorithm is $O(N^2)$, where $N$ is the total number of patients.  In any situation where the mean and variance of a bootstrap sample are used to draw conclusions, our methodology will be faster and more accurate than the randomized bootstrap or permutation test. 
\end{abstract}

KEYWORDS: Pairwise Comparisons, Combining Endpoints, Win Ratio, Bootstrap, Permutation Test

2010 MATHEMATICS SUBJECT CLASSIFICATIONS: 62G09, 62N99

\section{Background}
\par The context is the hierarchical clinical trial methodology of Finkelstein and Schoenfeld  \citeyearpar{FinkelsteinSchoenfeldCombining}, which is a far-reaching generalization of the classical Gehan-Wilcoxon test~\citep{Gehancensored}. The Finkelstein-Schoenfeld test produces a {\em p}-value, based on the the permutation distribution of the trial arms.   The {\em p}-value computation uses the difference between treatment wins and control wins, and does not consider the two items separately.  

Other authors have considered similar analyses. References include Buyse \citeyearpar{BuyseGeneralized}, Dong \citeyearpar{DongWinRatioVariance}, Luo et al. \citeyearpar{LuoAlternative, LuoWeighted}, Bebu et al. \citeyearpar{BebuUstatistics}, and Pocock et al. \citeyearpar{PocockWinRatio}. In these papers the variances for wins and their differences are computed using either asymptotic formulas or randomized sampling. Inferences are then drawn from the computed means and variances. 

We use graph theory concepts to derive the exact mean and variance of the trial arm wins for both bootstrap and permutation distribution analysis. The algorithm complexity is $O(N^2)$ in both time and space. 
This methodology eliminates the need for asymptotic evaluation of variance matrices. Our algorithm will be both faster and more accurate than randomized bootstrap or permutation tests in any situation where inferences are drawn from the mean and variance of the trial arm wins. For ratios the situation is slightly more complicated. We present R$^\circledR$ and SAS$^\circledR$ code to implement the computations. 

In addition to the theoretical treatment, we illustrate with a simple example.  R and SAS code, and a more detailed treatment of the example, are contained in appendices.

\subsection{A bit of notation} We will present expected values and variances under two different methodologies: the permutation distribution of trial arms and the bootstrap distribution with sampling from the trial arms separately. We will use the notations $\ExpP$, $\VarP$,  $\ExpBtwo$, and $\VarBtwo$ to clarify which methodology is involved in the specific formulas.

\section {The Finkelstein-Schoenfeld Test} \label{FinkelsteinSchoenfeldsectionmainline}

The  Finkelstein-Schoenfeld test is based on a hierarchical composition of a number of measures. For each measure every patient is compared to every other patient in a pairwise manner.  We define a score, $ u_{ij}$, which is chosen to reflect whether patient $i$ has had the more favorable outcome than patient $j$. The concept of more favorable can involve censoring, missing data,  and can include a threshold; accordingly patients who cannot be compared may not necessarily be tied.  The score is 
\begin{itemize}
\item[-1] if patient $i$ has a less favorable outcome than patient $j$;
\item[0] if the patients cannot be compared, or are tied;
\item[+1] if patient $i$ has a more favorable outcome than patient $j$. 
\end{itemize}

The scores are computed separately for each measure in the hierarchy. The overall comparison  for the pair is the first non-zero comparison, where first is defined by the prespecified hierarchy. It is quite possible that for many pairs of patients the overall comparison will remain $0$. It is also possible to have non-transitive comparisons; that is, three patients such that $ u_{ij} = 1$, $ u_{jk} = 1$, and $ u_{ki} = 1$ \citep{VerbeeckGeneralized}.

The  Finkelstein-Schoenfeld test is a score test based on the sum of the scores for the treated group. Suppose there are $N$ subjects in the trial, with $m$ subjects in the treatment group and $n$ subjects in the control group. Let $D_i =1$ for subjects in the treatment group, and $D_i =0$ for patients in the control group. Using the $ u_{ij}$ for every pair of patients defined above, we assign a score to each subject, $ U_i = \sum_j u_{ij}$.

The test is now based upon
\begin{equation} \label{FSdefinitionmainline}
 FS = \sum_{i = 1}^N D_i U_i  .
\end{equation}

It it easy to see that $\ExpP(FS) = 0$, and it can be shown that 
\begin{equation} \label{FSVarmainline} 
\VarP(FS) = \dfrac{mn}{N(N-1)}\sum_{i = 1}^N U_i^2.
\end{equation}

The expected value and variance above  furnish the needed information to compute a $z$-statistic and hence a $p$-value, under the null hypothesis of no difference between the trial arms.  The formulas are derived in  \cite{FinkelsteinSchoenfeldCombining}.

The Partner IB and TAVR UNLOAD trials have used this methodology prospectively \cite{PartnerIB1year},  \cite{UNLOADDesign}. The method has also been used to furnish alternative analyses to trials with other prespecified analyses \cite{BuyseGeneralized}, \cite{DongWinRatioVariance}, \cite{FinkelsteinSchoenfeldCombining}, \cite{PocockWinRatio}.

\section{Wins and Losses}\label{winsandlossessection}

The outcome of the trial described in section \ref{FinkelsteinSchoenfeldsectionmainline} can be described by two items:

\begin{itemize}
\item An $ N \times N$ outcome matrix $U$, which is skew and has entries in the set $\{-1, 0, +1\}$.
  
\item A trial arm vector $D$, which has the value 1 for patients in the treatment arm and 0 for patients in the control arm. There will be $m$ patients in the treatment arm, and $n$ patients in the control arm. 
\end{itemize}

One method to produce such an outcome matrix is analyzing a single variable by use of the Mann-Whitney test, or the Gehan-Wilcoxon test, which allows for censoring. The Finkelstein-Schoenfeld hierarchical analysis allows for consideration of more variables.   In the remainder of this manuscript we consider that the outcome matrix $U$ has been produced by some unspecified mechanism; the only restriction is the skewness mentioned above.  

We will analyze the means and variances of wins using two different models. In both cases the analysis is based on the observed $U$ matrix and the trial arm assignments. 

\begin{itemize}

\item {\em Permutation distribution:}  In this analysis all possible permutations of $m$ treatment assignments and $n$ control assignments are considered, and the actual assignments play no special role.  This permutation analysis is the analysis used in the classical Wilcoxon test \citep{HollanderWolfenonparametric}, and its generalization as the Gehan-Wilcoxon test \citep{Gehancensored} used to analyze a single censored variable. It is also the methodology that was used above in the Finkelstein-Schoenfeld analysis \citeyearpar{FinkelsteinSchoenfeldCombining}.  

\item{\em Bootstrap:} In this analysis a new dataset is obtained by random sampling, with repetition, separately from each of the observed treatment groups. For each sample the numbers of treatment and control wins are computed, and the mean and variance of these numbers over many samples are computed. These means and variances are used to draw inferences about the true numbers of wins. There are $m^m n^n$ possible samples, each equally likely.

\item As an alternative to actually performing the bootstrap sampling, we compute the mean and variance over all possible bootstrap samples, using an $O(N^2)$ algorithm to be described below. This method will be more accurate than actually performing the bootstrap, because the randomization error is eliminated. It will also be faster than evaluating a large number of bootstrap samples, because evaluating only one bootstrap sample is already $O(N^2)$. 

\end{itemize}

\subsection{The graphical model}

For narrative purposes it will prove convenient to think of the outcome matrix $U$ as the adjacency matrix of a directed graph ${\mathbb G}$. Our terminology is reasonably standard, and follows Wikipedia \citeyearpar{wikigraphdefinitions}.

\begin{itemize}
\item If $u_{ij} = 1$, the edge $e = (i, j)$ is an ordered pair of distinct vertices. The {\em head} of $e$ is  vertex $j$ and the {\em tail} is vertex $i$. If $u_{ij} \ne 1$, there is no corresponding edge.  In a pictorial representation, we draw an arrow from vertex $i$ to vertex $j$. If vertex $v$ is the head or tail of edge $e$, then we say that $e$ is {\em adjacent} to $v$.  Let $E$ denote the total number of edges of ${\mathbb G}$.  

\item  For a vertex $v$, the {\em indegree} $\indegreev$ is the number of edges whose head is the vertex $v$; equivalently the number of $1$ entries in the associated column of $U$, or the number of pair comparisons that represent a loss for that patient. Similarly the {\em outdegree} $\outdegreev$ is the number of edges whose tail is the vertex $v$; equivalently the number of $1$ entries in the associated row of $U$, or the number of pair comparisons that represent a win for that patient.

\item For a specific trial arm assignment $D$, trial arm wins are defined in the following manner.

\begin{itemize}

\item An edge $e = (i, j)$  is a {\em win} for a treatment patient if  $D_i = 1$ and $D_j = 0$. 
\begin{itemize}
\item Define the $E \times 1$ vector $\vectorT$ by $T_e = k$ if edge $e$ corresponds to $k$ wins for treatment. 
  
\item Let $W_T$ = $\sum_e T_e$ be the total number of wins for treatment patients.
\end{itemize}

\item An edge $e = (i, j)$  is a {\em win} for a control patient if  $D_i = 0$ and $D_j = 1$. The vector $\vectorC$ is defined analogously to the vector $\vectorT$.

\item We use the notations ${\originalgraphW}_T$ and ${\originalgraphW}_C$ for the numbers of treatment and control wins in the observed data, as opposed to the numbers in a particular permutation or bootstrap repetition.

\item The vector $\vectorTC$ is the $2E \times 1$ vector with $\left[TC_1, \dots, TC_E\right] = \newline \left[T_1, \dots, T_E\right]$, and $\left[TC_{E + 1}, \dots, TC_{2E}\right] = \left[C_1, \dots, C_E\right]$. 

\item In analyzing the observed data and in the permutation distribution analysis, the vector $\vectorTC$ will have entries in the set $\{0, 1\}$.  Because bootstrap sampling is with repetition, values $> 1$ are possible in the bootstrap analysis.

\end{itemize}
\end{itemize}

Figure \ref{smallexamplecaption} gives a small example.   The U-matrix is

\begin{equation} \label{smallexampleUmatrix}
U = \begin{bmatrix}  
     \hphantom{-}0    &                  -1    &\hphantom{-}0   &\hphantom{-}0    &\hphantom{-}1  \\ 
     \hphantom{-}1     &\hphantom{-}0   &\hphantom{-}1   &\hphantom{-}0     &                  -1 \\   
     \hphantom{-}0     &                   -1   &\hphantom{-}0   &\hphantom{-}1     &\hphantom{-}0  \\
     \hphantom{-}0     &\hphantom{-}0  &                    -1    &\hphantom{-}0    &                  -1  \\ 
                       -1    &\hphantom{-}1     &\hphantom{-}0    &\hphantom{-}1    &\hphantom{-}0  
\end{bmatrix}.
\end{equation}

 This small example illustrates  a $U$ matrix that cannot be obtained from the Mann-Whitney or Gehan-Wilcoxon analysis. First, notice that patient 1 has a more favorable outcome than patient 5, and patient 5 has a more favorable outcome than patient 4; however patient 1 does not have a more favorable outcome than patient 4.  Second, note that patients 1, 5, and 2 form a directed cycle, so that there is no way to determine which is superior.  It is not difficult to create a hierarchy that will yield these results; we leave the details as an exercise for the reader. 

\begin{figure}[h!]
    \centering
\begin{tikzpicture}[->,>=stealth',shorten >=1pt,auto,node distance=3cm, thick,main node/.style={circle,draw,font=\sffamily\Large\bfseries, fill=orange}]
  \node[main node] (1) {1};
  \node[main node] (2) [right of=1] {2};
  \node[main node] (3) [right of=2] {3};
  \node[main node] (5) [below of=2] {5};
  \node[main node] (4) [below of=3] {4};
 \path[every node/.style={font=\sffamily\small}]
    (1) edge node {} (5)
    (2) edge node {} (1)
         edge node {} (3)
    (3) edge node {} (4)
    (5) edge node {} (4)
        edge node {} (2); 
\end{tikzpicture}
 \caption{Small Graph example}\label{smallexamplecaption}
\end{figure}
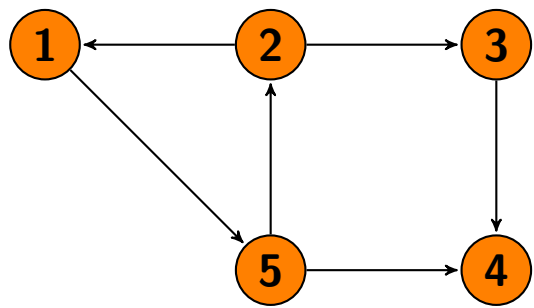

\clearpage

\subsection{Analysis using the graphical model}

For further analysis, we consider the outcome matrix $U$ to be fixed, and examine the behavior of the wins $W_T$ and $W_C$ under both analysis models. The  goal of this analysis is to compute the means and variances of $W_T$ and $W_C$, and their covariance.  

The analysis follows the same general pattern for both methods.
\begin{itemize}

\item The expected values $\Exp(\vectorTC)$ and $\Exp(\vectorTC(\vectorTC)^t)$ will be determined, from which we compute $\Var(\vectorTC) = \Exp(\vectorTC(\vectorTC)^t) - \Exp(\vectorTC)(\Exp(\vectorTC))^t$.  

\item For each pair of edges $(e, f)$, not necessarily distinct, the expected values $\Exp(T_eT_f)$, $\Exp(T_eC_f)$,  $\Exp(C_eT_f)$, and $\Exp(C_eC_f)$ can be computed using counting arguments.  The formulas will be different for different edge pairs, depending on the trial arm assignments and the geometric relationship of the edges.  

We note that for edge pairs in certain configurations, it will happen that $\Exp(T_eC_f) \ne \Exp(C_eT_f)$. This inequality is not a violation of the symmetry of the variance matrix; the latter requires that $\Exp(T_eC_f) = \Exp(C_fT_e)$ for all edge pairs $(e, f)$.

\item These computations are sufficient to compute all the terms of the expected value and variance of the vector $\vectorTC$. Define $S_{TT}, S_{TC}, S_{CT},$ and  $S_{CC}$ by
\begin{equation*} 
\Var(\vectorTC) = \begin{bmatrix} S_{TT} & S_{TC} \\ S_{CT} & S_{CC}\end{bmatrix},
\end{equation*}
where the partitioning corresponds to $\vectorT$ and $\vectorC$.  Finally
\begin{equation} \label{finalvariance}
\Var\left(\begin{bmatrix}W_T \\ W_C\end{bmatrix}\right) = \begin{bmatrix} \sum S_{TT} & \sum S_{TC} \\  \sum S_{CT} & \sum S_{CC} \end{bmatrix}.
\end{equation}
\begin{itemize} 
\item Because the final variance matrix is symmetric, we do not explicitly need the individual terms $\Exp(C_eT_f)$.
\item Since the number of potential edges is $O(N^2)$, the number of edge pairs is $O(N^4)$, and it would not be practical to explicitly compute $\Var(\vectorTC)$. Instead we count the number of times each geometric configuration of edges and trial arm assignments occurs, and compute \eqref{finalvariance} directly.  The resulting algorithm will be  $O(N^2)$, as will be seen below.

\end{itemize}

\item In all cases, the calculation of the expected value and variance terms is an elementary calculation involving binomial coefficients. In the text below we spare the reader the details; gentler derivations are available from the authors.  

\end{itemize}

\section{Wins analysis using the permutation distribution of trial arms} \label{AnalysisPermutation}

\begin{theorem} \label{permutationtheorem} The mean and variance of wins under the permutation distribution of trial arms can be computed in $O(N^2)$ time.
\end{theorem}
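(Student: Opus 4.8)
The plan is to compute the first and second moments of the stacked vector $\vectorTC$ under the permutation distribution and then assemble the blocks of \eqref{finalvariance} by summation, reducing every quantity that appears to vertex-degree statistics of $\GraphG$. First I would handle the means. Under the permutation distribution the label pattern depends only on the identity of the endpoints, not on $U$: for distinct vertices $i\ne j$ we have $P(D_i=1,\,D_j=0)=mn/(N(N-1))$, so since $T_e,C_e\in\{0,1\}$ we get $\ExpP(T_e)=\ExpP(C_e)=mn/(N(N-1))$ for every edge $e$, and hence $\ExpP(W_T)=\ExpP(W_C)=E\,mn/(N(N-1))$. This uses only the edge count $E$, which is read off $U$ in $O(N^2)$.

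For the variance I would expand $\VarP(W_T)=\sum_{(e,f)}\CovP(T_e,T_f)$ and likewise for the $TC$ and $CC$ blocks, so the whole task reduces to evaluating $\sum_{(e,f)}\ExpP(T_eT_f)$, $\sum_{(e,f)}\ExpP(T_eC_f)$, and $\sum_{(e,f)}\ExpP(C_eC_f)$ over ordered edge pairs and then subtracting the products of first moments, whose sums are just $(\ExpP W_T)^2$ and the analogous constants. The crucial observation is that $\ExpP$ of each product depends only on the geometric configuration of the ordered pair $(e,f)$, i.e.\ on how their endpoints coincide; and because $U$ is skew, no two distinct edges can be antiparallel, so distinct edges share at most one endpoint. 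This leaves only a fixed number of configurations: $e=f$; the two edges vertex-disjoint; a common tail; a common head; and the two path types in which the head of one edge is the tail of the other. For each configuration the required probability is an elementary binomial expression in $m,n,N$, vanishing exactly when it would force one vertex to be simultaneously treated and control (as $T_eT_f$ does in both path configurations). I would tabulate these probabilities once.

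The key step is then to count, for each configuration, how many ordered edge pairs realize it, and to note that every such count is a sum over vertices of products of in- and out-degrees, so that the prohibitive $O(N^4)$ enumeration of edge pairs is never performed. Common-tail pairs number $\sum_v \outdegreev(\outdegreev-1)$, common-head pairs $\sum_v \indegreev(\indegreev-1)$, each path type $\sum_v \indegreev\,\outdegreev$, the diagonal contributes $E$, and the disjoint count is obtained by subtracting all of these from $E^2$. Since the degrees are the row and column sums of the positive entries of $U$, computed in $O(N^2)$, and each configuration sum is then $O(N)$, the blocks $\sum S_{TT},\sum S_{TC},\sum S_{CC}$ of \eqref{finalvariance}, together with $\sum S_{CT}$ supplied by the symmetry already noted, are assembled in $O(N^2)$, which is the asserted bound.

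The main obstacle is bookkeeping rather than any single hard estimate: I must verify that the configuration list is exhaustive, which is exactly where skewness is used, and attach the correct probability to each of the products $T_eT_f$, $T_eC_f$, $C_eC_f$ in each configuration. In particular care is needed at the asymmetric path configurations, where $\ExpP(T_eC_f)\ne\ExpP(C_eT_f)$ because the two patterns constrain different numbers of treated endpoints, yet this does not disturb the symmetry of the final $2\times2$ variance matrix. Once the probability table and the degree-based counts are correct, combining them into \eqref{finalvariance} is immediate and the complexity claim follows.
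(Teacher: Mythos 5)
Your proposal is correct and follows essentially the same route as the paper's proof: classify ordered edge pairs by geometric configuration (identical edge, common head, common tail, the two head-to-tail path types, and disjoint pairs), count each class via the degree sums $\sum_v \indegreev(\indegreev-1)$, $\sum_v \outdegreev(\outdegreev-1)$, $\sum_v \indegreev\outdegreev$, obtain the disjoint count by subtraction from $E^2$, attach the elementary binomial-coefficient expectations to each class, and assemble the blocks of \eqref{finalvariance} in $O(N^2)$, including the same observation that $\ExpP(T_eC_f)\ne\ExpP(C_eT_f)$ in the path configurations without harming the symmetry of the final matrix. The only point you make explicit that the paper leaves implicit is that skewness of $U$ forbids antiparallel edges, so distinct edges share at most one vertex and the configuration list is exhaustive.
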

\begin{proof} \

\subsection{Expectations}

\begin{itemize}
\item For a single edge $e$, $\ExpP(T_e) = \ExpP(T_e) =  mn/N(N-1)$.
\item  For the entire graph $\GraphG$, $\ExpP(W_T) =  \ExpP(W_C) = Emn/N(N-1)$.
\item It follows that $\ExpP(W_T - W_C) = 0$, which is a restatement of $\ExpP(FS) = 0$, as was mentioned above.
\end{itemize}

\subsection{Variances}\label{PermutationVarianceSection}

There are $E^2$ ordered pairs of edges, and the algorithm gives contributions for each possible geometric configuration of the edges. Some algebraic simplification of the formulas below is possible; the terms have been kept separate so as to align with the cases described in the derivation in Appendix~\ref{DerivationsPermutation}

The first set of computations is performed separately at each vertex, and is derived from considering pairs of edges that meet at the vertex.
\par \medskip
{\bf Computations at a single vertex $v$}. 

\begin{align*}
\Eterm{P}_{TT}(v) ={} &\indegreev\dfrac{mn}{N(N-1)} &  \Case {1}  \\
                   &+ \indegreev  (\indegreev - 1)\dfrac{mn(m-1)}{N(N-1)(N-2)}  & \Case {2}\\
                    &+\outdegreev (\outdegreev - 1)\dfrac{mn(n-1)}{N(N-1)(N-2)}.  & \Case {3}\\
                    \\
\Eterm{P}_{CC}(v) ={} &\indegreev\dfrac{mn}{N(N-1)} & * \text{Case 1} * \\
                    &+\indegreev (\indegreev - 1)\dfrac{mn(n-1)}{N(N-1)(N-2)} & \Case {2} \\
                    &+\outdegreev(\outdegreev - 1)\dfrac{mn(m-1)}{N(N-1)(N-2)}. & \Case {3}\\
                    \\
\Eterm{P}_{TC}(v) ={} &\indegreev\outdegreev\dfrac{mn(n - 1)}{N(N-1)(N-2)}  & \Case {4}\\
                   &+\indegreev\outdegreev\dfrac{mn(m - 1)}{N(N-1)(N-2)}.  & \Case {5}\\
                   \\
\Eterm{P}_{CT}(v) ={} &\indegreev\outdegreev\dfrac{mn(m - 1)}{N(N-1)(N-2)}   & \Case {4}\\
                   &+\indegreev\outdegreev\dfrac{mn(n - 1)}{N(N-1)(N-2)}.  & \Case {5}
\end{align*}

 The total count of pairs considered is given by
\begin{align*} 
F_P ={} &\sum_v \big[\indegreev + \indegreev(\indegreev - 1) + \outdegreev(\outdegreev - 1) \\ &+ 2\indegreev\outdegreev\big].
\end{align*}

The final variance then is computed by
\begin{align}\label{varianceintermediatePermutation}
\ExpP&\left(\begin{bmatrix}W_T \\ W_C\end{bmatrix}\begin{bmatrix}W_T \\ W_C\end{bmatrix}^t\right) = \nonumber \sum_v \begin{bmatrix} \Eterm{P}_{TT}(v) & \Eterm{P}_{TC}(v) \\  \Eterm{P}_{CT}(v) & \Eterm{P}_{CC}(v) \end{bmatrix} +\\
&(E^2 - F_P)\dfrac{mn(m-1)(n - 1)}{N(N-1)(N-2)(N - 3)}\begin{bmatrix} 1 & 1 \\  1 & 1 \end{bmatrix}.
\end{align}

The last term in~\eqref{varianceintermediatePermutation} comes from Case 6 in the derivation in Appendix~\ref{DerivationsPermutation}. Finally
\begin{equation}
\VarP\left(\begin{bmatrix}W_T \\ W_C\end{bmatrix}\right) = \ExpP\left(\begin{bmatrix}W_T \\ W_C\end{bmatrix}\begin{bmatrix}W_T \\ W_C\end{bmatrix}^t\right)  - E^2
\left[\dfrac{mn}{N(N-1)}\right]^2\begin{bmatrix} 1 & 1 \\  1 & 1 \end{bmatrix}.
\end{equation}

The algorithm above, plus the discussion of complexity in section \ref{variancecomplexityallthree} completes the proof of theorem~\ref{permutationtheorem}.
\end{proof}

\section{Graphical approach to the bootstrap} 

Since the bootstrap uses the same patients as in the observed data set, the graph $\mathbb G$ described in section \ref{winsandlossessection} remains useful. 

The edges and vertices remain the same as previously. 
\begin{itemize}
\item The trial arm assignments do not change from the observed data, and we refer to {\em treatment vertices}  and {\em control vertices} based on the trial arms.  Now for each bootstrap sample there is one new piece of information: the {\em weight} of a vertex will be the number of times that the corresponding patient appears in the bootstrap sample.  

\item An edge corresponding to a treatment win or a control win in the observed data will be called a {\em treatment edge} or a {\em control edge}. 

\item There are ($\originalgraphW_T + \originalgraphW_C)^2$ ordered pairs of edges corresponding to wins. All such pairs contribute to the variance, and will be counted below. 
\begin{itemize}
\item Edges joining a pair of treatment vertices or edges joining a pair of control vertices cannot correspond to wins in any bootstrap sample, and hence do not contribute to the final variance terms. In other words, the variance of wins in a bootstrap sample does not depend on within arm comparisons.
\item This situation is in contrast to the permutation variance, where within arm comparisons explicitly contribute to the final variance.
\end{itemize}

\item For each vertex $v$, let $\adjacentTedges_v$ denote the number of treatment edges adjacent to that vertex, and let $\adjacentCedges_v$ denote the number of control edges adjacent to that vertex. 

As previously, the vector $\vectorT$ contains the number of treatment wins for each edge.  In section~\ref{winsandlossessection} a treatment edge would have had $T_e = 1$, since each patient appears once in the observed data. Now we have $T_e$ equal to the product of the vertex weights, because each instance of patient $i$ in the bootstrap sample will be preferred to each instance of patient $j$. If either patient $i$ or patient $j$ does not appear in the bootstrap, then the corresponding weight will be 0, and there are no corresponding wins; thus $T_e = 0$. The vector $\vectorC$ of control wins is extended to the bootstrap context in the same manner.  
  
\end{itemize}

 \section{Bootstrapping by sampling from the trial arms separately}\label{AnalysisBootstraptwo}

 In this section the sampling is separate in the two trial arms.   With a population of size $m$ treatment patients and $n$ control patients, there are $m^m n^n$ possible bootstrap samples, and our underlying assumption is that all are equally likely. 
 
 One could also consider bootstrap sampling from the entire population; the formulas are similar in spirit to those given here, and the details are available from the authors.

\begin{theorem} \label{bootstraptwotheorem} The mean and variance of wins under bootstrap sampling from the trial arms separately can be computed in $O(N^2)$ time.
\end{theorem}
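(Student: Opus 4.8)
The plan is to follow the same template used for Theorem~\ref{permutationtheorem}: determine $\ExpBtwo(\vectorTC)$ and $\ExpBtwo(\vectorTC(\vectorTC)^t)$, form the variance by subtracting the outer product of the means, and then argue the $O(N^2)$ bound. First I would record the joint law of the vertex weights. Because the resampling is done separately and with replacement within each arm, the weights of the $m$ treatment vertices follow a multinomial law with $m$ trials and uniform cell probabilities $1/m$, the weights of the $n$ control vertices follow an independent multinomial with $n$ trials and probabilities $1/n$, and the two arms are independent. From the standard multinomial moments I would extract, for treatment vertices, $\ExpBtwo(w_i)=1$, $\ExpBtwo(w_i^2)=(2m-1)/m$, and $\ExpBtwo(w_iw_{i'})=(m-1)/m$ for $i\neq i'$, with the analogous expressions (replacing $m$ by $n$) for control vertices. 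Since a treatment edge contributes $T_e=w_{i}w_{j}$ with $i$ its treatment endpoint and $j$ its control endpoint, and the two arms are independent, linearity immediately gives $\ExpBtwo(W_T)=\originalgraphW_T$ and $\ExpBtwo(W_C)=\originalgraphW_C$.

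For the second moments I would expand $\ExpBtwo(W_T^2)=\sum_{e,f}\ExpBtwo(T_eT_f)$ over ordered pairs of treatment edges. Writing $T_eT_f=w_{i(e)}w_{i(f)}\,w_{j(e)}w_{j(f)}$, where $i(\cdot)$ denotes treatment endpoints and $j(\cdot)$ control endpoints, independence between arms factors each expectation into a treatment-weight moment times a control-weight moment. Which moment appears depends only on the geometric configuration of the pair: whether the two edges coincide, share their (treatment) tail, share their (control) head, or share nothing. Because the graph has no repeated edges, two distinct treatment edges can share at most one endpoint, so these cases are mutually exclusive and each collapses, via the moments above, to an explicit constant. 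I would carry out the identical expansion for $\ExpBtwo(W_C^2)$ and for the cross term $\ExpBtwo(W_TW_C)=\sum_{e,f}\ExpBtwo(T_eC_f)$; in the cross term the skewness of $U$ is what I would lean on, since it forbids a treatment edge and a control edge from being mutual reverses, so they too share at most one vertex and the ``share both'' configuration never arises.

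The remaining work is to count how often each configuration occurs and to verify the complexity. At each vertex $v$ I would precompute $\adjacentTedges_v$ and $\adjacentCedges_v$ by scanning the rows and columns of $U$, an $O(N^2)$ step. The number of ordered pairs of distinct treatment edges sharing a tail is then $\sum_{v\text{ treatment}}\adjacentTedges_v(\adjacentTedges_v-1)$ and the number sharing a head is $\sum_{v\text{ control}}\adjacentTedges_v(\adjacentTedges_v-1)$; the cross-term shared-vertex counts are sums of $\adjacentTedges_v\adjacentCedges_v$; and the ``share nothing'' counts are obtained by subtracting the shared counts from the totals $\originalgraphW_T^2$, $\originalgraphW_C^2$, and $\originalgraphW_T\originalgraphW_C$. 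Each such vertex sum is $O(N)$, so the whole assembly is dominated by the initial $O(N^2)$ scan. I expect the main obstacle to be purely bookkeeping: making the edge-pair case analysis exhaustive and disjoint — in particular keeping the tail-shared and head-shared configurations separate, since they carry different weight moments — and confirming through skew-symmetry that no configuration is double counted. Once the cases are pinned down, the moment evaluations and the collapse to per-vertex sums are routine, completing the proof together with the complexity discussion in section~\ref{variancecomplexityallthree}.
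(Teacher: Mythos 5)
Your proposal is correct and takes essentially the same approach as the paper: the same expansion of the second moments of $(W_T, W_C)$ over ordered pairs of win edges classified by geometric configuration (coincident, sharing a treatment vertex, sharing a control vertex, disjoint), the same per-vertex counts $\adjacentTedges_v$ and $\adjacentCedges_v$ to tally the configurations, and the same $O(N^2)$ complexity argument. The only difference is presentational: you read off each case expectation from standard multinomial moments together with independence of the two arms, whereas the paper's appendix evaluates the corresponding multinomial-coefficient sums directly---which, as the paper itself remarks, are exactly the moment computations of the multinomial distribution.
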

\begin{proof} \

\subsection{Expectations}\label{Bootstraptwoexpectations}
As a first step we compute the expected values of the vectors $T$ and $C$.  
\begin{itemize}
\item For a single edge $e$ that corresponds to a treatment win, $\ExpBtwo(T_e) =  1$, and  $\ExpBtwo(C_e) =  0$.
\item  For the entire graph $\GraphG$, the expected number of treatment wins in a bootstrap sample is $\originalgraphW_T $.
\item For a single edge $e$ that corresponds to a control win, $\ExpBtwo(C_e) =  1$, and  $\ExpBtwo(T_e) =  0$.
\item  For the entire graph $\GraphG$, the expected number of control wins in a bootstrap sample is $\originalgraphW_C $.
\end{itemize}

\subsection{Variances}\label{BootstraptwoVariances}

There are $(\originalgraphW_T + \originalgraphW_C)^2$ ordered pairs of edges to consider, and the algorithm gives contributions for each possible geometric configuration of the edges. Some algebraic simplification of the formulas below is possible; the terms have been kept separate so as to align with the cases described in the derivation of Appendix~\ref{DerivationsBootstraptwo}.

\par \medskip

The vertex computations are different for treatment and control vertices.
\begin{itemize}
\item {\bf Computations at a single treatment vertex} $v$. 
\end{itemize}

\begin{align*}
\Eterm{B2}_{TT}(v) ={} &\dfrac{\adjacentTedges_v}{2} \dfrac{(2m -1)(2n - 1)}{mn}  & \Case{1}\\
 &+\adjacentTedges _v(\adjacentTedges_v - 1)\dfrac{ (2m - 1)(n - 1)}{mn}. &\Case{3} \\
 \\
 \Eterm{B2}_{CC}(v) ={} &\dfrac{\adjacentCedges_v}{2}\dfrac{(2m -1)(2n - 1)}{mn}  & \Case{2}\\
 &+\adjacentCedges _v(\adjacentCedges_v - 1)\dfrac{(2m - 1)(n - 1)}{mn}. &\Case{5} \\
 \\
\Eterm{B2}_{TC}(v) ={}& \adjacentTedges_v\adjacentCedges_v\dfrac{(2m - 1)(n - 1)}{mn}. &\Case{7}   \\
 \\
 \Eterm{B2}_{CT}(v) ={}& \adjacentCedges_v\adjacentTedges_v\dfrac{(2m - 1)(n - 1)}{mn}. &\Case{9}  
\end{align*}

\begin{itemize}
\item{\bf Computations at a single control vertex} $v$. 
\end{itemize}

\begin{align*}
\Eterm{B2}_{TT}(v) ={} &\dfrac{\adjacentTedges_v}{2} \dfrac{(2m -1)(2n - 1)}{mn}  & \Case{1}\\
 &+\adjacentTedges _v(\adjacentTedges_v - 1)\dfrac{ (m - 1)(2n - 1)}{mn}. &\Case{4} \\
 \\
 \Eterm{B2}_{CC}(v) ={} &\dfrac{\adjacentCedges_v}{2}\dfrac{(2m -1)(2n - 1)}{mn}  & \Case{2}\\
 &+\adjacentCedges _v(\adjacentCedges_v - 1)\dfrac{(m - 1)(2n - 1)}{mn}. &\Case{6} \\
 \\
\Eterm{B2}_{TC}(v) ={}& \adjacentTedges_v\adjacentCedges_v\dfrac{(m - 1)(2n - 1)}{mn}. &\Case{8}   \\
 \\
 \Eterm{B2}_{CT}(v) ={}& \adjacentCedges_v\adjacentTedges_v\dfrac{(m - 1)(2n - 1)}{mn}. &\Case{10}    
\end{align*}

{\bf Computations for the remaining edges}.
To complete the calculation we need to include pairs of non-intersecting edges.  The counts are
\begin{align*}
\remterm{B2}_{TT} &=  \originalgraphW_T(\originalgraphW_T - 1) - \sum_v \#T_v (\#T_v - 1). &\Case{11} \\
\remterm{B2}_{TC} &=  \originalgraphW_T\originalgraphW_C - \sum_v \#T_v \#C_v. &\Case{12}\\
\remterm{B2}_{CT} &=  \originalgraphW_C\originalgraphW_T - \sum_v \#C_v \#T_v. &\Case{13}\\
\remterm{B2}_{CC} &= \originalgraphW_C(\originalgraphW_C - 1) - \sum_v \#C_v (\#C_v - 1). &\Case{14}
\end{align*}

The final variance then is computed by
\begin{align*}
\ExpBone&\left(\begin{bmatrix}W_T \\ W_C\end{bmatrix}\begin{bmatrix}W_T \\ W_C\end{bmatrix}^t\right) = \sum_v \begin{bmatrix} \Eterm{B2}_{TT}(v) & \Eterm{B2}_{TC}(v) \\  \Eterm{B2}_{CT}(v) & \Eterm{B2}_{CC}(v) \end{bmatrix} +\\
&\dfrac{(m - 1)(n - 1)}{mn}\begin{bmatrix} \remterm{B2}_{TT} &  \remterm{B2}_{TC} \\ \remterm{B2}_{CT}  &\remterm{B2}_{CC} \end{bmatrix},
\end{align*}

and
\begin{equation}\label{finalvariancebootstraptwo}
\VarBtwo\left(\begin{bmatrix}W_T \\ W_C\end{bmatrix}\right) = \ExpBtwo\left(\begin{bmatrix}W_T \\ W_C\end{bmatrix}\begin{bmatrix}W_T \\ W_C\end{bmatrix}^t\right)  - 
\begin{bmatrix} \originalgraphW_T^2 & \originalgraphW_T\originalgraphW_C \\   \originalgraphW_C\originalgraphW_T & \originalgraphW_C^2 \end{bmatrix}.
\end{equation}

The algorithm above, plus the discussion of complexity in section \ref{variancecomplexityallthree} completes the proof of theorem~\ref{bootstraptwotheorem}.
\end{proof}

R code for the algorithm is given in appendix~\ref{RCodebootstraptwo}.

\section{Complexity} \label{variancecomplexityallthree}

\begin{itemize}
\item Since the mechanism of producing the matrix $U$ is undefined, all above diagonal entries must be examined at least once.  Accordingly no algorithm for the permutation or bootstrap analysis can have time complexity faster than $O(N^2)$.
\item Computing the total number of edges in the graph, and the numbers of treatment and control wins is  $O(N^2)$.
\item For each vertex, computing the indegree, the outdegree, and the counts $\#C_v$ and $\#T_v$, is performed by examining the entries in the corresponding row of the matrix $U$ and the entries in the trial arm vector. These operations are $O(N)$ for each vertex; since there are $N$ vertices the overall complexity of these computations is $O(N^2)$. 
\item The remaining computations are $O(1)$.
\item Accordingly the overall complexity of each algorithm is $O(N^2)$. 
\end{itemize}

When the Finkelstein-Schoenfeld method is used to compute the $U$ matrix, a bounded number of computations is performed for each pair of patients, so that the complexity of computing $U$ is also $O(N^2)$. One would anticipate that a considerably larger constant factor occurs in the complexity for computing $U$ than for computing the variance. Accordingly computing the variance and covariance of the wins, using any of the algorithms, would add only a negligble time to the overall computation.  

One could in principle consider all possible permutations or bootstrap samples, compute the numbers of treatment and control wins for each, and then compute means and variances. Such a computation would be exponential in $N$, and hence completely unsatisfactory for a real example. We have performed such computations in very small examples as a test of our algorithm. 

\section{Win Ratio} \label{ratiovariancesection}
Various authors including Pocock et al. \cite{PocockWinRatio},  Dong \cite{DongWinRatioVariance}, and Bebu~\cite{BebuUstatistics} examine the {\em win ratio}, which is $R_W = W_T/W_C$, and our methodology can yield information about the ratio also. As is standard with the analyses of frequencies \cite{AgrestiCategorical}, we analyze the log of the ratio because it is expected to be closer to normal than the ratio itself. 

\subsection{Pocock method}
Pocock suggests the following method for analyzing the win ratio. From the observed win ratio $R_W$, compute $\log(R_W)$.  Let $z$ be the standardized normal deviate obtained from the Finkelstein-Schoenfeld significance test for the null hypothesis of no difference between the trial arms. Then an approximate standard error for  $\log(R_W)$ is $s=\log(R_W)\div z$. This method is derived from the implicit assumption that the significance test of no difference between the trial arms should yield the same result whether working with win differences or win ratios.  In the case of ordinary frequencies, this assumption is very close to being true, and it is surely reasonable in the present context.  Once an approximate standard error has been computed, then confidence intervals can be computed and other inferences drawn.

As an alternative, Pocock suggests using the bootstrap, with separate sampling from the two treatment groups.  Our formulas for the bootstrap  could be used for this purpose. 

\subsection{Delta method}
Dong uses maximum likelihood and $U$-statistics theory to compute means and variances for $W_T$ and $W_C$, and then uses delta method transformations to compute means and variances of the ratio. Even though the present means and variances for $\Exp(W_T)$ and $\Exp(W_C)$ are based on counting rather than on maximum likelihood, the delta method transformations could still be applied to our variance formulas. The transformations indeed do produce reasonable values; we will deal with such issues in a subsequent paper.

\section{Discussion}

The methodology above gives an efficient method of computing the exact variance of the trial arm wins $W_T$ and $W_C$.  Whether or not this method constitutes a {\em closed form} depends on the definition of the term. The same comment would apply to the formula of Finkelstein-Schoenfeld for the variance of $W_T - W_C$. However, it seems to us that equation~\eqref{FSVarmainline} is considerably simpler to apply in practice than the more classical formula involving ties, as given in Hollander and Wolfe \citeyearpar[equation (4.13)]{HollanderWolfenonparametric} . Moreover, the derivation of the Finkelstein-Schoenfeld formula is vastly simpler than that for the classical formula. 

In the case of a single variable Mann-Whitney test, the variances from the classical formula, from the Finkelstein-Schoenfeld formula, and from our formula will agree, because all are based on counting arguments using the exact permutation distribution of the trial arms. An algebraic derivation of the identity would seem to be very complicated, and hardly worth the effort. One can, of course, compute as many numerical examples as desired. For this purpose the SAS PROC NPAR1WAY and R function {\tt wilcox.test} would be useful, but one must be careful to turn off the continuity correction. A similar note applies to the censored situation with a single variable. Both the Finkelstein-Schoenfeld formula and our formula implement the Gehan-Wilcoxon test as formulated by Gehan~\citeyearpar{Gehancensored}; the implementations in the SAS PROC LIFETEST and R function {\tt survdiff} are very slightly different. 

The analysis presented here using the bootstrap distribution is similar in spirit to the analysis of the permutation distribution.   Because of the special structure of the analysis considered here, we are able to determine the expected mean and variance from a bootstrap sample without actually doing the randomization. These values are effectively what would be found if an infinite number of bootstrap samples were taken.  Efron has computed exact values in a different context~\citep[ section 10.3]{EfronJackknifeBootstrap}.

Our computation of the bootstrap mean and variance is as easy as that for the permutation mean and variance used for the Mann-Whitney, Wilcoxon, or Gehan-Wilcoxon analyses. Accordingly the bootstrap evaluation could be considered a practical alternative to the original permutation test based analyses.  Because of the extensive historical use of Mann-Whitney, we would be highly reluctant to recommend making such a change. All we suggest is that one should proceed carefully in cases where the classical method and the bootstrap produce different results from a significance test. 

The derivations of the variance for wins  need to consider only pairs of edges. If one wishes to further investigate the distribution of wins, there are a number of constraints involving more than two edges.  For example, in an undirected cycle of 4 edges, it is possible for all four edges to correspond to wins. (In the example of figure~\ref{smallexamplecaption} let $D_2 = 1, D_3 = 0, D_4 = 1, \text{and } D_5 = 0$. Then the cycle gives 1 treatment win and 3 control wins.) On the other hand, in an undirected cycle of 3 edges, there can be at most 2 wins, since at least one edge must have both vertices with the same trial arm assignment.  Odd cycles play a pivotal role in maximum matching algorithms  \citep{MaximumMatchingSlither}, \citep{PathsTreesFlowers}, and conceivably they will play a role in further investigations of the distribution. 

The theoretical background for assuming normality in these analyses comes from the area of U-statistics. These items are discussed in Dong~\citeyearpar{DongWinRatioVariance}, Finkelstein-Schoenfeld~\citeyearpar{FinkelsteinSchoenfeldCombining},  Lehmann~\citeyearpar{LehmannNonparametrics}, and references therein. In this manuscript we deal with rather simpler combinatorial issues; we have nothing to add to the U-statistics literature.

\section{A small example} \label{smallexamplemain}
The small example of figure~\ref{smallexamplecaption} illustrates many of the cases above.  Because of the small size, it is possible to work out the complete permutation distribution of trial arms, all the bootstrap samples, and compare the results with the formulas.  Here we present just the results; further details are contained in the appendices.

The comparison matrix is defined by
\begin{equation*} \label{smallexampleU}
U = \begin{bmatrix}  
     \hphantom{-}0    &                  -1    &\hphantom{-}0   &\hphantom{-}0    &\hphantom{-}1  \\ 
     \hphantom{-}1     &\hphantom{-}0   &\hphantom{-}1   &\hphantom{-}0     &                  -1 \\   
     \hphantom{-}0     &                   -1   &\hphantom{-}0   &\hphantom{-}1     &\hphantom{-}0  \\
     \hphantom{-}0     &\hphantom{-}0  &                    -1    &\hphantom{-}0    &                  -1  \\ 
                       -1    &\hphantom{-}1     &\hphantom{-}0    &\hphantom{-}1    &\hphantom{-}0  
\end{bmatrix}.
\end{equation*}

The trial arms are
\begin{equation*} \label{smallexamplearms}
 D = \begin{bmatrix}  
     1 & 1 & 0 &0 &0
\end{bmatrix}.
\end{equation*}

\subsection{Permutation Distribution}
The expectation and variance are
\begin{equation}
\ExpP \begin{bmatrix} W_T \\ W_C \end{bmatrix} = \dfrac{1}{5} \begin{bmatrix} 9 \\ 9 \end{bmatrix}. 
\end{equation}

and 
\begin{equation}
\VarP \begin{bmatrix} W_T \\ W_C \end{bmatrix} = \dfrac{1}{100} \begin{bmatrix}76  &-24 \\ -24 & 56\end{bmatrix} .
\end{equation}

Then for the win difference, 
\begin{equation*}
\ExpP \left( W_T - W_C\right) = 0.
\end{equation*}

\begin{align*}
\VarP \left( W_T - W_C\right) &=  \nonumber \begin{bmatrix} 1 \\ -1 \end{bmatrix} \VarP \begin{bmatrix} W_T \\ W_C \end{bmatrix} \begin{bmatrix} 1 \\ -1 \end{bmatrix}^t \\
&= \nonumber  \dfrac{1}{100}\begin{bmatrix} 1 \\ -1 \end{bmatrix} \begin{bmatrix}76  &-24 \\ -24 & 56\end{bmatrix} \begin{bmatrix} 1 \\ -1 \end{bmatrix}^t \\
&=\dfrac{180}{100} = 1.80.
\end{align*}

The same results can be obtained by applying the Finkelstein-Schoenfeld formulas to the example.

\subsection{Bootstrap distribution -- two sample}

The expectation and variance are
 
\begin{equation}
\ExpBtwo \begin{bmatrix} W_T \\ W_C \end{bmatrix} =  \begin{bmatrix} 2 \\ 1 \end{bmatrix}. 
\end{equation}

and 
\begin{equation}
\VarBtwo \begin{bmatrix} W_T \\ W_C \end{bmatrix} = \dfrac{1}{6} \begin{bmatrix}10  &-1 \\-1 & 9\end{bmatrix}. 
\end{equation}

Then for the win difference, 
\begin{equation*}
\ExpBtwo \left( W_T - W_C\right) = 1.
\end{equation*}

\begin{align*}
\VarBtwo \left( W_T - W_C\right) &=  \nonumber \begin{bmatrix} 1 \\ -1 \end{bmatrix} \VarBtwo \begin{bmatrix} W_T \\ W_C \end{bmatrix} \begin{bmatrix} 1 \\ -1 \end{bmatrix}^t \\
&= \nonumber  \dfrac{1}{6}\begin{bmatrix} 1 \\ -1 \end{bmatrix} \begin{bmatrix}10  &-1 \\ -1 & 9\end{bmatrix} \begin{bmatrix} 1 \\ -1 \end{bmatrix}^t \\
&=\dfrac{21}{6} = 3.5.
\end{align*}

\section*{Appendices} 
There are number of appendices, which support and illustrate the derivations above.

\begin{itemize}
\item Appendix \ref{RCode}: R code for the two situations. 
\item Appendix \ref{SASCode}: SAS Code for the two situations.
\item Appendix \ref{DerivationsPermutation}: Details of wins analysis using the Permutation model.
\item Appendix \ref{DerivationsBootstraptwo}: Details of wins analysis using the two-sample bootstrap model.
\end{itemize}

In addition to these appendices, the following items are available from either of the authors.  
\begin{itemize}
\item A gentle derivation of The Finkelstein-Schoenfeld Test, in the spirit of this manuscript.
\item A derivation of the multinomial coefficient identities used in analyzing the bootstrap.
\item Formulas and detailed derivations of wins analysis for bootstrapping from the entire population, rather than from the trial arms separately.
\item R validation code for all three distributions. This file contains analysis of some examples using the the practical code, all permutations or bootstrap samples considered by hand, and randomized permutations or bootstrap samples.
\end{itemize}

\renewcommand{\thesection}{\Alph{section}}
\setcounter{section}{0}


\section{R code}\label{RCode} This appendix contains R code for the practical computations. The code uses only base R functions, and has been tested using R version 3.5.1. The code follows closely the algorithm presented in sections~\ref{AnalysisPermutation},  and \ref{AnalysisBootstraptwo}

\renewcommand\theequation{\ref{RCode}\arabic{equation}}
\setcounter{equation}{0}

\subsection{R code for the permutation distribution}\label{RCodepermutation}

\begingroup
    \fontsize{10pt}{12pt}\selectfont
\begin{verbatim}
# Compute means and variances for trial arm wins,
# based on the permutation distribution,
# using an already computed win matrix and trial arm counts. 
# The algorithm is O(N^2) in time and space.

onevertex_P <- function(indegree, outdegree){
     # indegree and outdegree correspond to a specific vertex
     # subscripts correspond to case numbering in manuscript
     c(indegree, indegree*(indegree - 1),  outdegree*(outdegree - 1), 
       indegree*outdegree, indegree*outdegree)
}

winsmeanandvariance_P <- function(winmatrix, trialarms){
     # code is modeled after manuscript section 4; some simplification is possible
     # winmatrix is the skew matrix of wins, perhaps from a hierarchical evaluation
     # m, n are the number of treatment and control patients: m + n = nrow(winmatrix)
     # function computes the expected number of treatment and control wins, and their variance
     # the actual trial assignments are not relevant for this function, just the counts
     m <- sum(trialarms == 1); n <- sum(trialarms == 0) 
     N <- m + n  # had better be the dimension of winmatrix
     indegrees <- colSums(winmatrix == 1)
     outdegrees <- rowSums(winmatrix == 1)
     # add the cases from all the vertices
     casecounts <- rowSums(mapply(onevertex_P, indegrees, outdegrees))
     expedge <- m*n/(N*(N-1)) # expected wins for one edge
     expected = casecounts[1]*rep(expedge, 2)
     names(expected) <- c("Test Wins", "Control Wins") 
     TTerms <- c(m*n/(N*(N - 1)), m*n*(m - 1)/(N*(N - 1)*(N - 2)),  
     	m*n*(n - 1)/(N*(N - 1)*(N - 2)), 0, 0)
     CTerms <- c(m*n/(N*(N - 1)), m*n*(n - 1)/(N*(N - 1)*(N - 2)),  
     	m*n*(m - 1)/(N*(N - 1)*(N - 2)), 0, 0)
     TCTerms <- c(0, 0, 0, m*n*(n - 1)/(N*(N - 1)*(N - 2)),  
     	m*n*(m - 1)/(N*(N - 1)*(N - 2)))
     CTTerms <- c(0, 0, 0, m*n*(m - 1)/(N*(N - 1)*(N - 2)),  
     	m*n*(n - 1)/(N*(N - 1)*(N - 2)))
     FP <- sum(casecounts)
     expectedforvarianceterm1 <- c(sum(casecounts*TTerms), sum(casecounts*TCTerms), 
                          sum(casecounts*TCTerms), sum(casecounts*CTerms))
     expectedforvarianceterm2 <- (casecounts[1]^2 - FP)*m*n*(m - 1)*(n - 1)/
     	(N*(N - 1)*(N - 2)*(N - 3))
     variance <- expectedforvarianceterm1 + expectedforvarianceterm2 - 
     	casecounts[1]^2*(m*n/(N*(N - 1)))^2
     variance <- matrix(variance, nrow = 2)
     rownames(variance) <- c("Test Wins", "Control Wins"); 
     colnames(variance) <- c("Test Wins", "Control Wins")
     expdiff <- expected[1] - expected[2]; names(expdiff) <- NULL
     vardiff <- variance[1, 1] - 2*variance[1, 2] + variance[2, 2]
     list(Expected = expected, Variance = variance, 
          `Expected Difference` = expdiff, `Variance Difference` = vardiff)
}

\end{verbatim}
\endgroup

\subsection {R code for the two-sample bootstrap distribution}\label{RCodebootstraptwo} 

\begingroup
    \fontsize{10pt}{12pt}\selectfont
\begin{verbatim}

onevertex_B2 <- function(T, C, arm){
     # T and C are the number of treatment and control wins involving that vertex
     # arm is the trial arm for that vertex
     # subscripts correspond to case numbering in manuscript
     if (arm == 1) cases <- c(T/2, C/2, T*(T-1), 0, C*(C-1), 0, T*C, 0, C*T, 0)
     else cases <-          c(T/2, C/2, 0, T*(T-1), 0, C*(C-1), 0, T*C, 0, C*T)
     cases
}


winsmeanandvariance_B2 <- function(winmatrix, trialarms){
     Observed <- wincomputations(winmatrix, trialarms)
     names(Observed) <- c("Test Wins", "Control Wins")
     N <- length(trialarms)
     treatmentarms <- trialarms == 1; m <- sum(treatmentarms) 
     controlarms <- trialarms == 0; n <- sum(controlarms)  
     # compute vectors giving the numbers of treatment and control wins for each vertex
     treatmentwinmatrix <- outer(treatmentarms, controlarms)*(winmatrix == 1) 
     treatmentwinmatrix <- treatmentwinmatrix + t(treatmentwinmatrix)
     controlwinmatrix <- outer(controlarms, treatmentarms)*(winmatrix == 1) 
     controlwinmatrix <- controlwinmatrix + t(controlwinmatrix)
     vertextreatmentwins <- rowSums(treatmentwinmatrix) 
     vertexcontrolwins <- rowSums(controlwinmatrix) 
     # compute treatment and control wins
     Twins <- sum(vertextreatmentwins)/2 # remove double count
     Cwins <- sum(vertexcontrolwins)/2 # remove double count
     # use the counted wins to count various cases corresponding to that vertex
     casecounts <- rowSums(mapply(onevertex_B2, vertextreatmentwins,
                                  vertexcontrolwins, trialarms)) 
     expedge <- 1 # expected wins for one edge
     expected = expedge*c(Twins, Cwins) 
     names(expected) <- c("Test Wins", "Control Wins")
     TTTerms <- c((2*m - 1)*(2*n - 1), 0, (2*m - 1)*(n - 1), (m - 1)*(2*n - 1) , 
     	rep(0,  6))/(m*n)
     CCTerms <- c(0, (2*m - 1)*(2*n - 1), 0, 0, (2*m - 1)*(n - 1),  (m - 1)*(2*n - 1), 
     	rep(0, 4))/(m*n)
     TCTerms <- c(rep(0, 6), (2*m - 1)*(n - 1), (m - 1)*(2*n - 1), rep(0, 2))/(m*n)
     CTTerms <- c(rep(0, 8), (2*m - 1)*(n - 1), (m - 1)*(2*n - 1))/(m*n)
     expectedforvarianceterm1 <- c(sum(casecounts*TTTerms), sum(casecounts*TCTerms), 
                                     sum(casecounts*TCTerms), sum(casecounts*CCTerms))
     expectedforvarianceterm2 <- (m - 1)*(n - 1)/(m*n)*
          c(Twins*(Twins - 1) - sum(casecounts[c(3, 4)]), 
                   Twins*Cwins - sum(casecounts[c(7, 8)]),
                   Cwins*Twins - sum(casecounts[c(9, 10)]), 
                   Cwins*(Cwins - 1) - sum(casecounts[c(5, 6)]))
     
     variance <- expectedforvarianceterm1 + expectedforvarianceterm2 -
          c(Twins*Twins, Twins*Cwins, Cwins*Twins, Cwins*Cwins)
     variance <- matrix(variance, nrow = 2)
     rownames(variance) <- c("Test Wins", "Control Wins") 
     colnames(variance) <- c("Test Wins", "Control Wins")  
     expdiff <- expected[1] - expected[2]
     vardiff <- variance[1, 1] - 2*variance[1, 2] + variance[2, 2]
     list(Expected = expected, Variance = variance,
          `Expected Difference` = expdiff, `Variance Difference` = vardiff)
}

\end{verbatim}
\endgroup


\section {SAS code}\label{SASCode} This section contains various items of SAS code, which have been tested using SAS version 9.4.

\renewcommand\theequation{\ref{SASCode}\arabic{equation}}
\setcounter{equation}{0}

\subsection{SAS Code for the permutation distribution}\label{SASCodepermutation}
\begin{verbatim}
/*This code creates an example U matrix*/
Data U; 
   input U1 U2 U3 U4 U5; 
   datalines;            
0 -1 0 0 1
1 0 1 0 -1
0 -1 0 1 0
0 0 -1 0 -1
-1 1 0 1 0
;                             

/*Assign values to be used later*/
%let nT=2; %let nC=3; %let N=%eval(&nT+&nC); 
%let expedge=%sysevalf(&nT*&nC/(&N*(&N-1))); 

/*Calculate the indegree, outdegree and counts for ID, 2, 3, 4 and 5*/
Data example; 
set U; 
array U U1-U&N; 
indegree=0; 
outdegree=0; 
do i=1 to &N;  
	if U[i]=-1 then indegree= indegree+1;
    if U[i]=1 then outdegree = outdegree+1;  
end; 
drop i;
case2=indegree*(indegree-1); 
case3=outdegree*(outdegree-1);
case4_5=indegree*outdegree; 
run; 

/*Sum indegree, and cases 2, 3, 4 and 5 */
proc means data=example (keep=outdegree indegree case2 case3 case4_5);
  output out=variance sum=;
run;

/*Calculate Fp counts and the variance for the number of wins for the
treatment arm (TT), the variance for the number of wins for the control arm
(CC) and their covariance (CT). Finally calculate the expectation (Exp_WD) and 
variance for the win difference (Var_WD)*/
data variance; 
set variance; 
Fp=indegree-case2-case3-2*case4_5;
factor1= &nT*(&nT - 1)*&nC/(&N*(&N - 1)*(&N - 2)); 
factor2= &nT*&nC*(&nC - 1)/(&N*(&N - 1)*(&N - 2));
TT= &expedge*indegree + factor1*case2 + factor2*case3;
CC= &expedge*indegree + factor2*case2 + factor1*case3;
TC= factor1*case4_5 + factor2*case4_5;
Exp_WD=outdegree*&expedge - indegree*&expedge;
Var_WD=TT+CC-2*TC;
drop _TYPE_ _FREQ_; 
run;

proc print data=variance (keep=Exp_WD Var_WD TT CC TC); run;
\end{verbatim}

\subsection{SAS Code for the two-sample bootstrap distribution}\label{SASCodetwosamplebootstrap}

\begin{verbatim}
/*Assign values to be used later*/
%let nT=2; %let nC=3; %let N=%eval(&nT+&nC); %let expedge=1; 
%let C=%eval(&nT+1);

/*create treatment win matrix (UT) and control win matrix (UC) from U */
Data UT1; 
set U (obs=&nT); 
array Uc U&C-U&N; 
array Ut U1-U&nT;
do i=1 to &nC; 
	if Uc[i]<0 then Uc[i]=0; 
end; 
do j=1 to &nT; 
	 Ut[j]=0; 
end;
drop i j; 
run; 

Data UT2; 
set U (firstobs=&C); 
array Uc U&C-U&N; 
array Ut U1-U&nT;
do i=1 to &nC; 
	Uc[i]=0;
end; 
do j=1 to &nT; 
	if Ut[j]<0 then Ut[j]=1; else Ut[j]=0;  
end;
drop i j; 
run; 

Data UT;
set UT1 UT2; 
run;

Data UC1; 
set U (obs=&nT); 
array Uc U&C-U&N; 
array Ut U1-U&nT;
do i=1 to &nC; 
	if Uc[i]=-1 then Uc[i]=1; else Uc[i]=0; 
end; 
do j=1 to &nT; 
	 Ut[j]=0; 
end;
drop i j; 
run; 

Data UC2; 
set U (firstobs=&C); 
array Uc U&C-U&N; 
array Ut U1-U&nT;
do i=1 to &nC; 
	Uc[i]=0;
end; 
do j=1 to &nT; 
	if Ut[j]<0 then Ut[j]=0;  
end;
drop i j; 
run; 

Data UC;
set UC1 UC2; 
run;

/*sum the wins per row for both win matrices*/
Data ST; 
set UT; 
array U U1-U5; 
T=0;  
do i=1 to &N;  
	if U[i]=1 then T= T+1;  
end; 
drop i;
run; 

Data SC; 
set UC; 
array U U1-U5; 
C=0;  
do i=1 to &N;  
	if U[i]=1 then C= C+1;  
end; 
drop i;
run; 

/*merge the two sums columns and calculate case 3-10 */
data S; 
merge ST(keep=T) SC(keep=C); 
run; 
data sumT; 
set S (obs=&nT); 
case3=T*(T-1);
case5=C*(C-1);
case7_9=T*C;
run;

data sumC; 
set S (firstobs=&C); 
case4=T*(T-1);
case6=C*(C-1);
case8_10=T*C;
run;
 
data cases; 
set sumT sumC; 
run;

proc means data=cases; 
output out=variance sum=; 
run; 

/*Calculate case 1-10 counts and the variance for the number of wins for the
treatment arm (TT), the variance for the number of wins for the control arm
(CC) and their covariance (CT or TC). Finally calculate the expectation (Exp_WD) 
and variance for the win difference (Var_WD)*/
data variance; 
set variance;  
case1=T/2;
case2=C/2; 
factor0 = (2*&nT-1)*(2*&nC - 1)/(&nT*&nC);
factor1t = (2*&nT-1)*(&nC-1)/(&nT*&nC);
factor1c = (&nT-1)*(2*&nC-1)/(&nT*&nC);
factor2 = (&nT-1)*(&nC-1)/(&nT*&nC);
remTT=case1*(case1-1)-(case3+case4);
remCC=case2*(case2-1)-(case5+case6);
remTC=case1*case2-(case7_9+case8_10); 
TT = case1*factor0 + case3*factor1t + case4*factor1c + remTT*factor2-case1**2;
CC = case2*factor0 + case5*factor1t + case6*factor1c + remCC*factor2-case2**2;
TC = case7_9*factor1t + case8_10*factor1c + remTC*factor2-case1*case2;
Exp_WD=case1*&expedge - case2*&expedge;
Var_WD=TT+CC-2*TC;
drop _TYPE_ _FREQ_; 
run;

proc print data=variance (keep=Exp_WD Var_WD); run;

\end{verbatim}

\section{Detailed wins analysis using the permutation model} \label{DerivationsPermutation}
\renewcommand\theequation{\ref{DerivationsPermutation}\arabic{equation}}
\setcounter{equation}{0}

In this section we give detailed derivations of the formulas used in section \ref{AnalysisPermutation}.

\subsection{Expectations}

The total number of trial arm assignments is $\binom{N}{m}$. For a specific edge  $e = (i, j)$  the number of assignments with $D_i = 1$ and $D_j = 0$ is 
$\binom{N - 2}{m - 1}$.   Hence for all edges $e$, 
\begin{equation}\label{ExpTeP}
\ExpP(T_e) =  \binom{N-2}{m-1}\bigg/\binom{N}{m} = \dfrac{mn}{N(N-1)},  
\end{equation}
 and 
 
\begin{equation} \label{ExpW_T}
\ExpP(W_T) = E\dfrac{mn}{N(N-1)}.
\end{equation}

Since having  $D_i = 1$ and $D_j = 0$ is equally likely to having  $D_i = 0$ and $D_j = 1$, it follows that $\ExpP(C_e) = \ExpP(T_e)$, and $\ExpP(W_C) = \ExpP(W_T)$.  

\subsection{Variances}\label{PermutationVarianceSectionDetails}

The variance is found by considering the geometric configuration of pairs of edges.  The description for each case is merely for ease of following the computations; the actual algorithm uses only the counts and the expected value formulas.  

The first computations are performed separately at each vertex, and the formulas below give $\ExpP(T_eT_f)$, where $(e, f)$ is an ordered pair of edges, possibly the same.

\par \medskip
{\bf Computations at a single vertex $v$}. 

\begin{enumerate}
\renewcommand{\labelenumi}{Case \arabic{enumi}:}
\item A single edge $e$ with head $v$:
\begin{itemize}
\item The term appears $\indegreev$times.  

\item Since $T_e$ is 0 or 1, we have using~\eqref{ExpTeP}
\begin{equation*} 
\ExpP(T_e^2) = \ExpP(T_e) = \dfrac{mn}{N(N-1)}. 
\end{equation*}

\item Similarly, 
\begin{equation*}
\ExpP(C_e^2) = \ExpP(C_e) = \dfrac{mn}{N(N-1)}. 
\end{equation*}

\item Since an edge cannot be simultaneously a win for treatment and for control, we have $T_e C_e = 0$ for all edges $e$ and all permutations, and thus
\begin{equation*}
\ExpP(T_e C_e) =  \ExpP(C_e T_e) =0.                       
\end{equation*}
\end{itemize}

\item An ordered pair $(e, f)$ of distinct edges, each with head $v$:
\begin{itemize}
\item Edges $(3, 4)$ and $(5, 4)$ in figure \ref{smallexamplecaption} are an example of distinct edges with a common head. 
\item  Because the pair is ordered, both $(e, f)$ and $(f, e)$  will appear in the computation. Accordingly the term appears  $\indegreev(\indegreev - 1)$ times.
\item  Consider an ordered pair of edges $e = (v_i, v_j)$ and $f = (v_k, v_j)$ where $\{v_i, v_j, v_k\}$ are distinct; that is vertex $v_j$ is the head for both edges.  Then $T_e T_f = 1$ iff $D_i = D_k = 1$ and $D_j = 0$.  There remain $N - 3$ patients to be assigned trial arms, with $m - 2$ treatment patients and $n - 1$ control patients. The number of trial arm permutations satisfying these conditions is $\binom{N - 3}{m - 2}$.  Accordingly.

\begin{equation*}
\ExpP(T_e T_f) = \binom{N - 3}{m - 2} \bigg/ \binom{N}{m} = \dfrac{m n(m - 1)}{N (N - 1) (N - 2)}.
\end{equation*}
\item Similarly,
\begin{equation*}
\ExpP(C_e C_f) = \binom{N - 3}{n - 2} \bigg/ \binom{N}{m} =  \dfrac{m n (n - 1)}{N (N - 1) (N - 2)}.
\end{equation*}

\item $T_e$ and $C_f$ cannot both be 1, because the former requires $D_j = 0$, and the latter requires $D_j = 1$. Hence $T_e C_f = 0$ always.   Accordingly 
\begin{equation*}
\ExpP(T_e C_f) = \ExpP(C_e T_f) =0.
\end{equation*}
\end{itemize}

\item An ordered pair $(e, f)$ of distinct edges, each with tail $v$:
\begin{itemize}
\item Edges $(5, 2)$ and $(5, 4)$ in figure \ref{smallexamplecaption} are an example of distinct edges with a common tail. 
\item  Because the pair is ordered, both $(e, f)$ and $(f, e)$  will appear in the computation. Accordingly the term appears  $\outdegreev (\outdegreev - 1)$ times in the computations for vertex $v$.
\item  Consider an ordered pair of edges $e = (v_i, v_j)$ and $f = (v_i, v_l)$ where $\{v_i, v_j, v_l\}$ are distinct; that is  vertex $v_i$ is the tail for both edges.  Then $T_e T_f = 1$ iff $D_i = 1$ and $D_j = D_l = 0$.  The number of trial arm permutations satisfying these conditions is $\binom{N - 3}{m - 1}$.  Accordingly
\begin{equation*}
\ExpP(T_e T_f) = \binom{N - 3}{m - 1} \bigg/ \binom{N}{m} =  \dfrac{m  n (n - 1)}{N (N - 1) (N - 2)}.
\end{equation*}

Similarly, 
\begin{equation*} 
\ExpP(C_e C_f) = \binom{N - 3}{n - 1} \bigg/ \binom{N}{m} = \dfrac{m n (m - 1)  }{N (N - 1) (N - 2)}.
\end{equation*}

\item  $T_e$ and $C_f$ cannot both be 1, because the former requires $D_i = 1$, and the latter requires $D_i = 0$. Hence $T_e C_f = 0$ always.   Accordingly 
\begin{equation*}
\ExpP(T_e C_f) = \ExpP(C_e T_f) =0.
\end{equation*}
\end{itemize}
 
\item An ordered pair $(e, f)$ of distinct edges, where vertex $v$ is the tail of edge $e$ and the head of edge $f$:
\begin{itemize}
\item Edges $(1, 5)$ and $(2, 1)$ in figure \ref{smallexamplecaption} are an example of this situation. 
\item The term appears  $\indegreev\outdegreev$ times.
\item Consider a pair of edges $e = (v_i, v_j)$ and $f = (v_k, v_i)$ where $\{v_i, v_j, v_l\}$ are distinct; that is vertex $v_i$ is the tail of edge $e$ and the head of edge $f$.  Then $T_e$ and $T_f$ can never both be 1, because the former requires $D_i = 1$ and the latter requires $D_i = 0$. Accordingly

\begin{equation*} 
\ExpP(T_e T_f) = \ExpP(C_e C_f) =0.
\end{equation*} 

\item $T_e C_f = 1$ iff $D_i =  1$ and $D_j = D_k = 0$ .   Accordingly
\begin{equation*} 
\ExpP(T_e C_f)  = \binom{N - 3}{m - 1} \bigg/ \binom{N}{m} = \dfrac{m  n (n - 1)}{N (N - 1) (N - 2)}.
\end{equation*}

\item $C_e T_f = 1$ iff $D_i =  0$ and $D_j = D_k = 1$ .   Accordingly
\begin{equation*} 
\ExpP(C_e T_f)  = \binom{N - 3}{n - 1} \bigg/ \binom{N}{m} = \dfrac{m n (m - 1) }{N (N - 1) (N - 2)}.
\end{equation*}

\end{itemize}

\item An ordered pair $(e, f)$ of distinct edges, where vertex $v$ is the head of edge $e$ and the tail of edge $f$:
\begin{itemize}
\item Edges $(2, 1)$ and $(1, 5)$ in figure \ref{smallexamplecaption} are an example of this situation. Note that these are the same edges mentioned in the previous case, but the order is different.
\item The term appears  $\indegreev\outdegreev$ times.
\item Consider a pair of edges $e = (v_i, v_j)$ and $f = (v_j, v_k)$ where $\{v_i, v_j, v_ k\}$ are distinct; that is vertex $v_j$ is the head of edge $e$ and the tail of edge $f$.  Then $T_e$ and $T_f$ can never both be 1, nor can $C_e$ and $C_f$. Accordingly

\begin{equation*} 
\ExpP(T_e T_f) = \ExpP(C_e C_f) =0.
\end{equation*} 

\item $T_e C_f = 1$ iff $D_i  = D_k = 1$ and $D_j =  0$ .   Accordingly
\begin{equation*} 
\ExpP(T_e C_f)  = \binom{N - 3}{m - 2} \bigg/ \binom{N}{m} = \dfrac{m n(m - 1)}{N (N - 1) (N - 2)}.
\end{equation*}
Similarly, 
\item \begin{equation*} 
 \ExpP(C_e T_f)= \binom{N - 3}{n - 2} \bigg/ \binom{N}{m} =  \dfrac{m  n (n - 1)}{N (N - 1) (N - 2)}.
\end{equation*}
\end{itemize}
\end{enumerate}

\par \medskip
{\bf Computations for the remaining edge pairs}. 
\begin{enumerate}
\renewcommand{\labelenumi}{Case \arabic{enumi}:}
\setcounter{enumi}{5}
\item An ordered pair $(e, f)$ of non-intersecting edges:
\begin{itemize}
\item Edges $(2, 3)$ and $(1, 5)$ in figure \ref{smallexamplecaption} are an example of non-intersecting edges. 
\item The computations in cases 1 - 5 above have considered all possibilities for single edges and for intersecting edges.  There are a total of $E^2$ ordered pairs of edges, the total count of pairs considered is given by
\begin{align*} 
F ={} &\sum_v \big[\indegreev + \indegreev(\indegreev - 1) + \outdegreev(\outdegreev - 1) \\ &+ 2\indegreev\outdegreev\big].
\end{align*}
The remaining count is $E^2 - F$.   

\item Consider a pair of edges $e = (v_i, v_j)$ and $f = (v_k, v_l)$ where $\{v_i, v_j, v_k, v_l\}$ are distinct; that is the edges do not meet.  Then $T_e T_f = 1$ iff $D_i = D_k= 1$ and $D_j = D_l = 0$.  The number of trial arm permutations satisfying these conditions is $\binom{N - 4}{m - 2}$.  The same counting argument applies to $C_e C_f$, $T_e C_f$, and $C_e T_f$.  Accordingly

\begin{align*}
\ExpP(T_e T_f) &= \ExpP(T_e C_f) = \ExpP(C_e T_f) = \ExpP(C_e C_f) \\
                      &=   \binom{N - 4}{m - 2} \bigg/ \binom{N}{m} \\
                      &= \dfrac{mn(m - 1) (n - 1)}{N(N-1)(N - 2)(N - 3)} .
\end{align*}
\end{itemize} 
\end{enumerate}


\section{Detailed wins analysis using the two-sample bootstrap} \label{DerivationsBootstraptwo}
\renewcommand\theequation{\ref{DerivationsBootstraptwo}\arabic{equation}}
\setcounter{equation}{0}
In this section we give detailed derivations of the formulas used in section \ref{AnalysisBootstraptwo}.  The computations use elementary sums of multinomial coefficients; these are the same sums that one would evaluate in computing moments of the multinomial distribution.

\subsection{Expectations}\label{BootstraptwoexpectationsDetails}
Consider a treatment edge $e$. Then the head corresponds to a control patient, and the tail to a treatment patient. The sampling for head and tail is from the two different trial arms, and for the expected value of $T_e$, we have

\begin{align}\label{Bootstraptwoexpectationformula}
\ExpBtwo(T_e) &= \nonumber \dfrac{1}{m^m n^n}  \sum_{k_1 + \dots k_m = m} k_1 \binom{m}{k_1, k_2, \dots, k_m} 
                                                                              \sum_{l_1 + \dots l_l = l} l_1 \binom{n}{l_1, l_2, \dots, l_m}\\
                      &= \dfrac{1}{m^m n^n} m^m n^n = 1.
\end{align}
For this edge we will have $\ExpBtwo(C_e) = 0$, since this edge cannot correspond to a control win in any bootstrap sample. Similarly, if an edge $f$ is a control edge, then $\ExpBtwo(T_f) = 0$ and $\ExpBtwo(C_f) = 1$.

\subsection{Variances}\label{Bootstraponetwovariances}

\par \medskip
{\bf Computations at a single vertex}. 

\begin{enumerate}
\renewcommand{\labelenumi}{Case \arabic{enumi}:}

\item A treatment edge, adjacent to the vertex $v$:
\begin{itemize}
\item The term appears $\#T_v/2$ times. The factor $1/2$ is due to the fact that the same edge will be counted at its other end also.
\item Let the edge be $e = (v_1, v_2)$, with corresponding vertex weights $k_1, k_2$.   The number of possibilities for  $T_e >0$ is $k_1k_2$.   Then
\begin{align} \label{varsingleedgebootstraptwo}
\ExpBtwo(T_e^2) &= \nonumber \dfrac{1}{m^m n^n}  \sum_{k_1 + \dots k_m = m} k_1^2 \binom{m}{k_1, k_2, \dots, k_m} \\
                               &\nonumber\qquad \qquad \times\sum_{l_1 + \dots l_l = l} l_1^2 \binom{n}{l_1, l_2, \dots, l_m}\\
                      &= \nonumber \dfrac{1}{m^m n^n} (2m - 1)m^{m-1}(2n-1)n^{n-1} \\
                      &= \dfrac{(2m -1)}{m}\dfrac{(2n -1)}{n}.
\end{align}

For this same edge, $C_e = 0$ for all bootstrap samples, and hence 
\begin{equation*} 
\ExpBone(C_e^2) = \ExpBone(T_eC_e) =  \ExpBone(C_eT_e) = 0. 
\end{equation*}
\end{itemize}

\item A control edge, adjacent to the vertex $v$:
\begin{itemize}
\item The term appears $\#C_v/2$ times. The factor $1/2$ is due to the fact that the same edge will be counted at its other end also.
\item The remaining computations are similar to case 1.
\begin{equation*} 
\ExpBone(T_e^2)  =   \ExpBone(T_eC_e) =  \ExpBone(C_eT_e)  = 0.
\end{equation*}
\begin{equation*} 
\ExpBone(C_e^2)  =   \dfrac{(2m -1)}{m}\dfrac{(2n -1)}{n}.
\end{equation*}
\end{itemize}

\item \label{B2TTT} An ordered pair $(e, f)$ of distinct treatment edges, each adjacent to the treatment vertex $v$:
\begin{itemize}
\item The term appears  $\#T_v (\#T_v - 1)$ times. 
\item Let the edges be $e = (v_1, v_2)$ and $f = (v_1, v_3)$, with corresponding vertex weights are $k_1, k_2, k_3$.   Then the common vertex $v_1$ is a treatment vertex.  The number of possibilities for $T_e >0$ is $k_1k_2$, and for $T_f > 0$ is $k_1k_3$.   Then the number of samples that produce both $T_e > 0$ and $T_f > 0$ is $k_1^2k_2k_3$, and  
\begin{align*}
\ExpBtwo(T_eT_f) &= \nonumber\dfrac{1}{m^mn^n}\sum_{k_1 + \dots k_m = m} k_1^2\binom{N}{k_1, k_2, \dots, k_N} \\
&\qquad \qquad \times\sum_{l_1 + \dots l_n = n} l_2 l_3\binom{n}{l_1, l_2, \dots, l_n}\\
 &=  \dfrac{1}{m^mn^n} (2m-1)m^{m-1}(n-1)n^{n-1} \\
    &=   \dfrac{(2m-1)(n-1)}{mn}.
 \end{align*}   
Because both edges correspond to treatment wins, we have
\begin{align*}
\ExpBtwo(C_eC_f) = \ExpBtwo(T_eC_f) = \ExpBtwo(C_eT_f) = 0.  
\end{align*}
\end{itemize}

\item \label{B2TTC} An ordered pair $(e, f)$ of distinct treatment edges, each adjacent to the control vertex $v$:
\begin{itemize}
\item The term appears  $\#T_v (\#T_v - 1)$ times. 
\item Let the edges be $e = (v_1, v_2)$ and $f = (v_3, v_2)$, with corresponding vertex weights are $k_1, k_2, k_3$.   Then the common vertex $v_2$ is a control vertex.  The number of possibilities for $T_e >0$ is $k_1k_2$, and for $T_f > 0$ is $k_3k_2$.   Then the number of samples that produce both $T_e > 0$ and $T_f > 0$ is $k_1 k_2^2 k_3$, and  
\begin{align*}
\ExpBtwo(T_eT_f) &= \nonumber\dfrac{1}{m^mn^n}\sum_{k_1 + \dots k_m = m} k_1 k_3\binom{N}{k_1, k_2, \dots, k_N} \\
&\qquad \qquad \times \sum_{l_1 + \dots l_n = n} l_2 ^2\binom{n}{l_1, l_2, \dots, l_n}\\
 &= \dfrac{1}{m^mn^n} (m-1)m^{m-1}(2n-1)n^{n-1} \\
    &=   \dfrac{(m-1)(2n-1)}{mn}
\end{align*}
Because both edges correspond to treatment wins, we have 
\begin{align*}\ExpBtwo(C_eC_f) = \ExpBtwo(T_eC_f) = \ExpBtwo(C_eT_f) = 0.  
\end{align*}
\end{itemize}

\item An ordered pair $(e, f)$ of distinct control edges, each adjacent to the treatment vertex $v$:
\begin{itemize}
\item The term appears  $\#C_v (\#C_v - 1)$ times. 
\item The expected value computations are similar to those of case~\ref{B2TTT}.
\begin{align*}
\ExpBtwo(C_eC_f)  &= \dfrac{(2m-1)(n-1)}{mn}. \\
\ExpBtwo(T_eT_f) &= \ExpBtwo(T_eC_f) = \ExpBtwo(C_eT_f) = 0. 
\end{align*} 
\end{itemize}

\item An ordered pair $(e, f)$ of distinct control edges, each adjacent to the control vertex $v$:
\begin{itemize}
\item The term appears  $\#C_v (\#C_v - 1)$ times. 
\item The expected value computations are similar to those of case~\ref{B2TTC}.
\begin{align*}
\ExpBtwo(C_eC_f) &= \dfrac{(m-1)(2n-1)}{mn}.\\
\ExpBtwo(T_eT_f) &= \ExpBtwo(T_eC_f) = \ExpBtwo(C_eT_f) = 0.  
\end{align*}
\end{itemize}

\item  An ordered pair $(e, f)$ of edges, where $e$ is a treatment edge and $f$ is a control edge, each adjacent to the treatment vertex $v$:
\begin{itemize}
\item The term appears  $\adjacentTedges_v\adjacentCedges_v $ times.
\item The expected value computations are similar to those of case~\ref{B2TTT}.
\begin{align*}
\ExpBtwo(T_eC_f)  &= \dfrac{(2m-1)(n-1)}{mn}.\\
\ExpBtwo(T_eT_f) &= \ExpBtwo(C_eT_f) = \ExpBtwo(C_eC_f) = 0.
\end{align*}  
\end{itemize}

\item  An ordered pair $(e, f)$ of edges, where $e$ is a treatment edge and $f$ is a control edge, each adjacent to the control vertex $v$:
\begin{itemize}
\item The term appears  $\adjacentTedges_v\adjacentCedges_v $ times.
\item The expected value computations are similar to those of case~\ref{B2TTC}.
\begin{align*}
\ExpBtwo(T_eC_f) &= \dfrac{(m-1)(2n-1)}{mn}. \\
\ExpBtwo(T_eT_f) &= \ExpBtwo(C_eT_f) = \ExpBtwo(C_eC_f) = 0.  
\end{align*}
\end{itemize}

\item  \label{itemBtwo6t}An ordered pair $(e, f)$ of edges, where $e$ is a control edge and $f$ is a treatment edge, each adjacent to the treatment vertex $v$:
\begin{itemize}
\item The term appears  $\adjacentCedges\adjacentTedges_v $ times.
\item The expected value computations are similar to those of case~\ref{B2TTT}.
\begin{align*}
\ExpBtwo(C_eT_f) &= \dfrac{(2m-1)(n-1)}{mn}.\\
\ExpBtwo(T_eT_f) &= \ExpBtwo(T_eC_f) = \ExpBtwo(C_eC_f) = 0.  
\end{align*}
\end{itemize}

\item  \label{itemBtwo6c}An ordered pair $(e, f)$ of edges, where $e$ is a control edge and $f$ is a treatment edge, each adjacent to the control vertex $v$:
\begin{itemize}
\item The term appears  $\adjacentCedges\adjacentTedges_v $ times.
\item The expected value computations are similar to those of case~\ref{B2TTC}.
\begin{align*}
\ExpBtwo(C_eT_f) &= \dfrac{(m-1)(2n-1)}{mn}. \\
\ExpBtwo(T_eT_f) &= \ExpBtwo(T_eC_f) = \ExpBtwo(C_eC_f) = 0. 
\end{align*} 
\end{itemize}

\end{enumerate}

\par \medskip
{\bf Computations for the remaining edge pairs}. 

\begin{enumerate}
\renewcommand{\labelenumi}{Case \arabic{enumi}:}
\setcounter{enumi}{10}
\item  An ordered pair $(e, f)$ of non-intersecting edges, both treatment edges:
\begin{itemize}
\item  The term appears  $\originalgraphW_T(\originalgraphW_T - 1) - \sum_v \#T_v (\#T_v - 1)$ times.  This count comes from $\originalgraphW_T(\originalgraphW_T - 1)$, which is the total number of ordered pairs of distinct treatment edges, minus the pairs considered in cases 3 and 4 above.

\item Let the edges be $e = (v_1, v_2)$ and $f = (v_3, v_4)$, with corresponding vertex weights are $k_1, k_2, k_3, k_4$.   The number of possibilities for  $T_e >0$ is $k_1k_2$, and for $T_f > 0$ is $k_3k_4$.   Then the number of samples that produce both $T_e > 0$ and $T_f > 0$ is $k_1k_2k_3k_4$, and
\begin{align*} 
\ExpBtwo(T_eT_f) &= \nonumber\dfrac{1}{m^mn^n}\sum_{k_1 + \dots k_m = m} k_2 k_4\binom{N}{k_1, k_2, \dots, k_N} \\
&\qquad \qquad \times \sum_{l_1 + \dots l_n = n} l_1 l_3\binom{n}{l_1, l_2, \dots, l_n}\\
 &= \dfrac{1}{m^mn^n} (m-1)m^{m-1}(n-1)n^{n-1} \\
    &=   \dfrac{(m-1)(n-1)}{mn}.
\end{align*}
Because both edges correspond to treatment wins, we have 
\begin{equation*} \ExpBtwo(C_eC_f) = \ExpBtwo(T_eC_f) = \ExpBtwo(C_eT_f) = 0. \end{equation*}
\end{itemize}

\item An ordered pair $(e, f)$ of non-intersecting edges, both control edges. Reasoning as in case 11 we have

The term appears  $\originalgraphW_C(\originalgraphW_C - 1) - \sum_v \#C_v (\#C_v - 1)$ times. 
\begin{align*} 
\ExpBone(T_e T_f) = \ExpBone(T_eC_f) = \ExpBone(C_eT_f)= 0.
\end{align*}
\begin{equation*} \ExpBone(C_eC_f) = \dfrac{(m-1)(n-1)}{mn}. \end{equation*}

\item An ordered pair $(e, f)$ of non-intersecting edges, where $e$ is a treatment edge and $f$ is a control edge. Reasoning as in case 11 we have

The term appears  $\originalgraphW_T\originalgraphW_C  - \sum_v \#T_v \#C_v$ times. 

\begin{align*} 
\ExpBone(T_e T_f) = \ExpBone(C_eT_f) = \ExpBone(C_e C_f) =  0.
\end{align*}
\begin{equation*} \ExpBone(T_eC_f) =  \dfrac{(m-1)(n-1)}{mn}. \end{equation*}

\item An ordered pair $(e, f)$ of non-intersecting edges, where $e$ is a control edge and $f$ is a treatment edge. Reasoning as in case 11 we have

The term appears  $\originalgraphW_C\originalgraphW_T  - \sum_v \#C_v \#T_v$ times. 

\begin{align*} 
\ExpBone(T_e T_f) = \ExpBone(T_eC_f) = \ExpBone(C_e C_f) =  0.
\end{align*}
\begin{equation*} \ExpBone(C_eT_f) =  \dfrac{(m-1)(n-1)}{mn}. \end{equation*}

\end{enumerate}
\end{document}